\documentclass[10pt]{article}

\usepackage[preprint]{neurips_2023}
\makeatletter
\renewcommand{\@noticestring}{}
\makeatother

\usepackage[utf8]{inputenc}
\usepackage[T1]{fontenc}
\usepackage{hyperref}
\usepackage{url}
\usepackage{booktabs}
\usepackage{amsfonts}
\usepackage{amsmath}
\usepackage{amssymb}
\usepackage{nicefrac}
\usepackage{microtype}
\usepackage{xcolor}
\usepackage{graphicx}
\usepackage{subcaption}
\usepackage{algorithm}
\usepackage{algorithmic}
\usepackage{enumitem}
\usepackage{amsthm}
\usepackage{ragged2e}
\usepackage{placeins}
\usepackage{fancyhdr}

\newtheorem{definition}{Definition}

\newtheorem{proposition}{Proposition}

\newtheorem{remark}{Remark}

\fancypagestyle{plain}{\pagestyle{fancy}}
\fancypagestyle{empty}{%
  \fancyhf{}
  
  \fancyfoot[C]{\footnotesize Preprint — version 1 (February 2026) \hfill 
\thepage}
}

\begin{document}


\title{
\textbf{Parametric Traversal for Multi-Dimensional Cost-Aware Graph Reasoning}\\[0.3em]
\large{A Policy-Driven Approach for Telecom and Datacenter Infrastructure Graphs}
}

\author{
Nicolas Tacheny \\
Ni2 Innovation Lab\\
\texttt{nicolas.tacheny@ni2.com}
}

\maketitle


\begin{abstract}
Classical path search assumes complete graphs and scalar optimization metrics,
yet real infrastructure networks are incomplete and require multi-dimensional
evaluation. We introduce the concept of \emph{traversal}: a generalization of
paths that combines existing edges with \emph{gap transitions}---missing but
acceptable connections representing links that can be built. This abstraction
captures how engineers actually reason about infrastructure: not just what
exists, but what can be realized.

We present a parametric framework that treats planned connections as first-class
transitions, scales to large graphs through efficient candidate filtering, and
uses multi-dimensional criteria to decide whether a traversal should continue
to be explored or be abandoned. We evaluate the framework through representative scenarios in datacenter
circuit design and optical route construction in telecommunication networks,
demonstrating conditional feasibility, non-scalarizable trade-offs, and
policy calibration capabilities beyond the reach of classical formulations.
\end{abstract}


\section{Introduction}

Network infrastructures typically consist of physical links between resources
(cables connecting devices and ports) and logical links between systems
(circuits connecting network nodes or hosts). A path is an ordered sequence of
nodes linked together to form an end-to-end connection between a source and a
destination. Finding optimal paths to connect devices and nodes is often
complex and traditionally relies on error-prone, manual, or cumbersome network
tools. Automated path finding has become a key operational capability for
digital service providers.

Path search is a foundational problem in graph theory and underlies numerous
applications in networking, logistics, robotics, and infrastructure management.
Classical formulations typically assume a fully specified graph and aim at
optimizing a simple metric such as path length or cumulative weight. However,
real-world infrastructure networks rarely conform to these assumptions.

In datacenters and telecommunication networks---whether wired or
wireless---infrastructures are often incomplete, heterogeneous, and subject to
operational constraints. Engineering decisions are fundamentally about what
\emph{can be built}, not just what exists. Certain connections may be missing
but inexpensive to create---a cable within the same rack, a planned fiber
extension, a microwave link between two towers---while others are costly or
forbidden except in exceptional circumstances. Design tools that ignore this
distinction and only traverse existing edges are of limited practical use,
as they cannot reason about how to meet new demand when existing connectivity
is insufficient.

Furthermore, the notion of an optimal path is highly contextual. Real engineers
evaluate feasibility by tracking multiple dimensions simultaneously: the number
of new connections to deploy, racks or sites crossed, optical attenuation
budgets, cost thresholds. Reducing all these factors to a single scalar weight
loses critical information and prevents meaningful optimization.

This paper introduces a programmable path search framework designed to address
these challenges. The framework is built around four key design decisions:

\paragraph{Gap transitions as first-class citizens.}
The framework models path search as a sequence of transitions between nodes,
explicitly distinguishing between \emph{edge transitions} (existing connections)
and \emph{gap transitions} (missing but acceptable connections). We call such
a sequence a \emph{traversal}---a generalization of the classical notion of
path that may combine existing edges with segments yet to be built. This
abstraction captures the reality of infrastructure planning where ``missing''
links are often cheap, planned, or policy-allowed. Traversal should be
understood as a planning abstraction rather than a routing artifact.

\paragraph{Separation of acceptability domain and predicate.}
To determine which gaps are acceptable, one could evaluate all $O(|N|^2)$ node
pairs---computationally prohibitive for large graphs. Instead, the framework
separates the problem into a fast, indexable \emph{acceptability domain} that
restricts candidates (using spatial indices, rack/site scoping, or vendor
compatibility rules) and a fine-grained \emph{acceptability predicate} for
detailed evaluation. This mirrors how production systems actually work and
enables scaling to large infrastructures.

\paragraph{Multi-dimensional accumulation state.}
Rather than scalarizing path costs into a single weight, the framework
associates each partial traversal with an arbitrary accumulation state---tracking
gap counts, racks crossed, attenuation budgets, or any combination thereof. This
preserves the information engineers need to make meaningful decisions.

\paragraph{Demonstrative evaluation on industrial scenarios.}
Rather than benchmarking against classical algorithms---which would require
reducing the model to a scalar formulation it is designed to
transcend---we evaluate the framework through representative scenarios drawn
from datacenter and telecommunication network design. These scenarios
demonstrate conditional feasibility under business rules, non-scalarizable
trade-offs, and ex-ante policy calibration: properties that classical
formulations cannot express.

\paragraph{Contributions.}
This paper makes the following contributions:

\paragraph{Traversal-level connectivity model.}
We introduce a formal model of connectivity based on traversals rather than
paths, enabling reasoning over both existing connections and potential
deployable transitions in incomplete infrastructure graphs.

\paragraph{Explicit separation of reasoning concerns.}
The framework cleanly separates candidate generation (acceptability domain),
feasibility filtering (acceptability predicate), resource accumulation, and
exploration policies, exposing independent control levers for operational
reasoning.

\paragraph{Policy-driven feasibility reasoning.}
We show how operational policies (e.g., deployment limits, physical budgets,
service tiers) can be encoded as exploration predicates, making feasibility an
explicit and inspectable outcome of the traversal process.

\paragraph{Support for non-scalarizable trade-offs.}
The traversal model naturally represents multiple admissible solutions that are
not totally ordered by a single cost metric, reflecting realistic engineering
trade-offs that cannot be captured by classical shortest-path formulations.

\paragraph{Demonstrative evaluation on telco and datacenter scenarios.}
Through structurally realistic scenarios, we demonstrate conditional
feasibility, non-comparable design alternatives, and policy calibration
capabilities that are not expressible with traditional graph-based routing
approaches.

\newpage

\section{Related Work}

Reasoning about connectivity has long been studied in graph theory and, more
recently, in networking research. However, these traditions have evolved
largely independently. On the one hand, theoretical work has progressively
enriched shortest-path search with multi-objective and resource constraints,
while separately studying how to augment graph connectivity. On the other hand,
networking research has advanced programmability and intent-based abstractions,
but often leaves the feasibility reasoning layer implicit. This section reviews
both lines of work and clarifies how the proposed traversal framework relates
to them.

\subsection{Graph-Theoretic Foundations: From Optimal Paths to Connectivity
Augmentation}

\paragraph{Shortest paths under constraints.}
Classical shortest-path algorithms assume a fully specified graph and optimize a
scalar objective. Over time, this formulation has been extended to incorporate
increasingly rich constraint structures.

Time-dependent shortest path problems already demonstrate that feasibility and
optimality may depend on external policies rather than solely on topology.
Dean~\cite{dean2004timedependent} studies minimum-cost paths in time-dependent
networks with waiting policies, showing how constraints on waiting behavior
alter the structure of feasible paths. In these formulations, however, the
graph remains fully defined; policies influence traversal cost or timing, not
the existence of transitions.

A major line of research generalizes shortest paths to multi-resource and
multi-objective settings. Resource-Constrained Shortest Path (RCSP) problems,
surveyed by Pugliese and Guerriero~\cite{pugliese2013rcsp}, associate multiple
resource accumulations with each path and rely on dominance-based label-setting
or label-correcting methods to prune infeasible or dominated solutions.
Multi-Objective Shortest Path (MOSP) problems further extend this perspective
by treating costs as vectors and computing Pareto-efficient path sets.
Foundational algorithmic developments include dominance-based extensions of
Martins' algorithm~\cite{gandibleux2006martins}, scalable parallel
label-setting approaches~\cite{sanders2013parallel}, and more recent
heuristic-search variants such as multi-objective D*
Lite~\cite{ren2022modlite}. A recent survey by Salzman et
al.~\cite{salzman2023mosp} synthesizes progress in heuristic multi-objective
search and highlights scalability challenges in high-dimensional cost spaces.

These contributions share two core assumptions: (1)~the graph topology is
explicitly given; (2)~search operates over existing edges only. The main
research questions concern optimality, dominance pruning, and scalability under
vector-valued costs. In contrast, the traversal framework proposed in this
paper retains multi-dimensional accumulation and pruning mechanisms but alters
the problem setting itself: transitions are not restricted to existing edges.
Instead, they may include acceptable but unrealized connections, whose
admissibility is governed by domain-specific policies. The focus shifts from
optimal path selection in a fixed topology to conditional feasibility in an
incomplete one.

\paragraph{Graph connectivity augmentation.}
A distinct branch of graph theory addresses incomplete connectivity through
augmentation. Graph augmentation problems study how to add a minimum set of
edges to satisfy connectivity properties such as $k$-edge or $k$-vertex
connectivity. Early approximation algorithms were developed by Frederickson and
Ja'Ja'~\cite{frederickson1981augmentation}, and comprehensive treatments appear
in surveys such as Frank and Jord\'{a}n~\cite{frank2015augmentation}. In this
setting, the objective is global: modify the graph so that it satisfies a
structural property with minimal cost.

While graph augmentation explicitly acknowledges missing edges, it treats
augmentation as a design decision applied to the entire graph. The output is a
modified topology satisfying a target property. In contrast, the traversal
framework does not compute a globally augmented graph. Instead, it explores
potential transitions locally and conditionally during search.
Missing-but-acceptable connections are considered as first-class transitions,
yet no permanent modification of the graph is performed. Feasibility emerges at
the level of individual traversals under operational constraints, rather than
as a global connectivity guarantee.

Thus, the proposed model can be understood as occupying an intermediate position
between classical path search (fixed topology) and graph augmentation (global
topology modification): it reasons over conditional transitions during search
without transforming the underlying graph structure.

\subsection{Networking Research: From Programmability to Intent and Feasibility}

\paragraph{Programmable and policy-driven networks.}
In parallel with theoretical developments, networking research has
progressively abstracted control and management. Software-defined networking
(SDN) and related paradigms decouple control logic from forwarding behavior,
enabling centralized decision-making and policy enforcement. Building on this
foundation, intent-based networking (IBN) aims to express high-level
operational objectives independently of their implementation.

RFC~9315~\cite{rfc9315} provides a formal clarification of intent concepts,
distinguishing intent from policy and outlining the functional components
required for intent realization. Recent surveys, such as Leivadeas and
Falkner~\cite{leivadeas2023ibn}, review architectural patterns for intent
translation, policy enforcement, and assurance in automated networks. These
works significantly advance the abstraction of network management and formalize
how desired states can be expressed and validated.

However, while intent-based frameworks define what the network should achieve,
they often assume that the underlying connectivity can be computed over a known
topology. The reasoning process that determines whether a connectivity
objective is physically realizable---especially in partially documented or
incomplete infrastructures---remains embedded within system-specific
implementations. Feasibility is typically treated as an outcome of routing or
provisioning tools rather than as an explicit modeling construct.

\paragraph{Routing with multiple constraints in operational networks.}
Routing with multiple constraints has been extensively studied in the context of
QoS provisioning and software-defined networking (SDN). Comprehensive surveys,
such as Guck et al.~\cite{guck2018qos}, review delay-constrained and
multi-constrained routing algorithms designed for centralized SDN control.
These approaches compute feasible paths under delay or bandwidth constraints
and improve expressiveness beyond scalar shortest-path routing.

However, they operate over explicitly defined network graphs in which all
admissible links are assumed to be known and represented in the topology. In
practical infrastructure planning scenarios---such as optical route
construction or datacenter cabling---the topology is often incomplete,
evolving, or only partially documented. Engineers frequently reason not only
about existing links but also about connections that could be deployed under
operational constraints. Traditional constraint-based routing formulations do
not model this distinction explicitly, as feasibility is evaluated over a fixed
topology. In contrast, our framework enables reasoning over conditionally
admissible transitions, making connectivity itself policy-dependent rather than
purely topological.

\subsection{Positioning of the Traversal Framework}

The traversal framework proposed in this paper integrates and extends these
lines of work along three dimensions.

First, like RCSP and MOSP approaches, it maintains multi-dimensional
accumulation states and relies on pruning predicates to control combinatorial
growth. However, it does not seek a single optimal solution or even a Pareto
frontier over a fixed topology; instead, it enumerates admissible traversals
whose feasibility depends on policy-defined constraints.

Second, unlike graph augmentation, it does not compute a globally modified
graph satisfying a structural property. Acceptable but unrealized connections
are treated as conditional transitions explored during search, without altering
the base graph. Connectivity is therefore a property of traversals under
policy, not a static property of the graph.

Third, while intent-based networking and policy-driven architectures define
high-level objectives and governance mechanisms, the traversal model provides
an explicit reasoning layer that bridges intent and physical realizability in
incomplete infrastructures. The separation between acceptability domain and
predicate, together with externalized accumulation and exploration predicates,
makes feasibility an inspectable and programmable construct rather than an
implicit by-product of routing.

In summary, prior work has enriched path search with multi-dimensional costs,
formalized connectivity augmentation, and abstracted network intent. The
traversal framework unifies these perspectives by introducing policy-governed
transition generation in incomplete typed graphs, enabling conditional
connectivity reasoning that is neither reducible to classical shortest-path
optimization nor to global graph augmentation.


\section{Parametric Traversal Model}

\subsection{Typed Graph}

Let $\mathcal{P}_N$ and $\mathcal{P}_E$ be arbitrary sets representing node
properties and edge properties, respectively.

\begin{definition}[Typed Graph]
A \emph{typed graph} is a tuple $G = (N, E, \rho_N, \rho_E)$ where:
\begin{itemize}[leftmargin=1.5em,nosep]
\item $N$ is a finite set of nodes,
\item $E \subseteq N \times N$ is a set of directed edges,
\item $\rho_N : N \rightarrow \mathcal{P}_N$ assigns properties to nodes,
\item $\rho_E : E \rightarrow \mathcal{P}_E$ assigns properties to edges.
\end{itemize}
\end{definition}

The property sets $\mathcal{P}_N$ and $\mathcal{P}_E$ are application-specific
and may encode physical, geographical, or operational characteristics.

\subsection{Transitions and Traversals}

A central feature of this framework is the ability to traverse not only 
existing edges but also acceptable connections between nodes that are not yet 
realized in the graph. We call such acceptable connections \emph{gaps}.

To determine which connections are acceptable, one could in principle evaluate
all $O(|N|^2)$ pairs of nodes. However, this is computationally prohibitive for
large graphs. Moreover, in practice, only a small subset of pairs is relevant.

To address this, we separate the problem into two stages:
\begin{enumerate}[leftmargin=1.5em]
\item The \emph{acceptability domain} restricts, for each node $n$, the set of
candidate nodes to consider. This restriction can leverage efficient data
structures such as spatial indices, type-based filtering, or precomputed
neighborhood structures.
\item The \emph{acceptability predicate} performs fine-grained evaluation on 
the restricted set of candidates, encoding complex business rules or 
constraints.
\end{enumerate}

\begin{definition}[Acceptability Domain]
For each node $n \in N$, the \emph{acceptability domain} $\gamma_n \subseteq N$
is the set of nodes to evaluate for acceptability from $n$.
\end{definition}

\begin{definition}[Acceptability Predicate]
An \emph{acceptability predicate} is a function
\[
\alpha : \{(n, m) \in N \times N \mid m \in \gamma_n\} \rightarrow
\{\mathrm{true}, \mathrm{false}\}
\]
that determines whether a connection between two nodes is acceptable.
\end{definition}

\begin{definition}[Transition]
Given a typed graph $G = (N, E, \rho_N, \rho_E)$, acceptability domains
$(\gamma_n)_{n \in N}$, and an acceptability predicate $\alpha$, a
\emph{transition} is a pair $t = (n, m) \in N \times N$ that is either:
\begin{itemize}[leftmargin=1.5em,nosep]
\item an \emph{edge transition} if $(n, m) \in E$, or
\item a \emph{gap transition} if $(n, m) \notin E$, $m \in \gamma_n$, and
$\alpha(n, m) = \mathrm{true}$.
\end{itemize}
We denote by $T$ the set of all transitions.
\end{definition}

\begin{definition}[Traversal]
A \emph{traversal} in a typed graph $G$ is a finite sequence of transitions
$\tau_k = (t_1, \ldots, t_k)$ where $t_i = (n_i, m_i)$ and $m_i = n_{i+1}$ for
all $i \in \{1, \ldots, k-1\}$. The integer $k \geq 0$ is the \emph{length} of
the traversal. We denote by $\mathcal{T}$ the set of all traversals.
\end{definition}

A traversal is said to be \emph{edge-connected} if all its transitions are edge 
transitions; in this case it corresponds to a classical path in the graph. A 
traversal containing at least one gap transition is called a \emph{gapped
traversal}.

\subsection{Accumulation}

In classical graph algorithms, paths are often evaluated using a single scalar
metric such as length or total weight. However, real-world applications
frequently require tracking multiple properties simultaneously: costs, risks,
resource consumption, constraint violations, or any combination thereof. Here,
\emph{costs} and \emph{risks} are understood in a broad sense---they may
represent monetary expenses, latency, energy consumption, failure probability,
or any domain-specific metric relevant to the application.

The \emph{accumulation} mechanism addresses this need by associating an 
arbitrary state with each traversal. This state is computed incrementally as 
the traversal extends, allowing the search algorithm to:
\begin{itemize}[leftmargin=1.5em,nosep]
\item evaluate and compare traversals based on complex, application-specific 
criteria,
\item prune branches early when the accumulated state indicates infeasibility,
\item return not only the traversal itself but also its associated properties.
\end{itemize}

Let $\mathcal{A}$ be an application-specific set of accumulation states (e.g.,
$\mathbb{R}^+ \times \mathbb{R}^+$ for costs and distances).

\begin{definition}[Accumulation]
An \emph{accumulation} over $\mathcal{A}$ is defined by:
\begin{itemize}[leftmargin=1.5em,nosep]
\item an initial state $a_0 \in \mathcal{A}$, and
\item a step function $g : \mathcal{A} \times \mathcal{T} \rightarrow \mathcal{A}$.
\end{itemize}
The accumulation of a traversal $\tau_k$ of length $k$ is denoted $a_{\tau_k}$
and defined recursively as:
\[
a_{\tau_k} =
\begin{cases}
a_0 & \text{if } k = 0 \\
g(a_{\tau_{k-1}}, \tau_k) & \text{if } k \geq 1
\end{cases}
\]
\end{definition}

The step function receives the previous accumulation $a_{\tau_{k-1}}$ and the
traversal up to step $k$. This allows implementations to use only the last
transition $t_k$ for simple cases, or to access the full history when needed.

\subsection{Exploration Predicate}

As the traversal extends, the algorithm must decide how to proceed: should it
continue exploring, terminate with a valid solution, or prune the current
branch? This decision depends on the traversal and its accumulated state, and
is application-specific.

\begin{samepage}
\begin{definition}[Exploration Predicate]
An \emph{exploration predicate} is a function
\[
\sigma : \mathcal{T} \times \mathcal{A} \rightarrow \{\mathtt{continue},
\mathtt{terminate}, \mathtt{prune}\}
\]
that determines the action to take for a traversal $\tau$ with accumulation
$a_\tau$:
\begin{itemize}[leftmargin=1.5em,nosep]
\item $\mathtt{continue}$: the traversal is extended by exploring successors,
\item $\mathtt{terminate}$: the traversal is added to the solution set and
exploration continues from its successors,
\item $\mathtt{prune}$: the traversal is discarded and its successors are not
explored.
\end{itemize}
\end{definition}
\end{samepage}


\section{Traversal Search Algorithm}

This section presents a generic algorithm for finding traversals in a typed graph, combining the components defined in Section~3: the typed graph $G = (N, E, \rho_N, \rho_E)$, the acceptability domain $\gamma_n$, the acceptability predicate $\alpha$, the accumulation function $g$, and the exploration predicate $\sigma$.

The algorithm explores the space of partial traversals. A \emph{traversal state} is a pair $(\tau, a_\tau)$ where $\tau \in \mathcal{T}$ is a traversal and $a_\tau \in \mathcal{A}$ is its accumulation. For a traversal $\tau_k = (t_1, \ldots, t_k)$ with $t_k = (n_k, m_k)$, we call $m_k$ the \emph{current node}.

The algorithm maintains a \emph{frontier} $\mathcal{F}$: a collection of traversal states awaiting expansion. The \emph{frontier policy} $\phi$ determines the order in which states are extracted (e.g., FIFO for BFS, priority queue for UCS or A*, bounded-width for beam search).

When a state $(\tau_k, a_{\tau_k})$ with current node $n$ is extracted, successor states are generated:
\begin{itemize}[leftmargin=1.5em,nosep]
\item \textbf{Edge transitions:} for each $m$ such that $(n, m) \in E$, form $\tau_{k+1} = \tau_k \cdot (n, m)$ with accumulation $a_{\tau_{k+1}} = g(a_{\tau_k}, \tau_{k+1})$.
\item \textbf{Gap transitions:} for each $m \in \gamma_n$ such that $(n, m) \notin E$ and $\alpha(n, m) = \mathrm{true}$, form $\tau_{k+1} = \tau_k \cdot (n, m)$ with accumulation $a_{\tau_{k+1}} = g(a_{\tau_k}, \tau_{k+1})$.
\end{itemize}
Here $\tau \cdot t$ denotes the concatenation of transition $t$ to traversal $\tau$.

\newpage
\subsection{Algorithm}
The complete procedure is given in Algorithm~\ref{alg:traversal}.

\begin{algorithm}[t]
\begin{algorithmic}[1]
\REQUIRE Typed graph $G = (N, E, \rho_N, \rho_E)$, start node $n_0$
\REQUIRE Acceptability domains $(\gamma_n)_{n \in N}$, acceptability predicate $\alpha$
\REQUIRE Accumulation function $g$, initial accumulation $a_0$
\REQUIRE Exploration predicate $\sigma$, frontier policy $\phi$
\ENSURE Set of valid traversals $\mathcal{S}$
\STATE $\tau_0 \leftarrow ()$ \COMMENT{Empty traversal}
\STATE $\mathcal{F} \leftarrow \{(\tau_0, a_0)\}$ \COMMENT{Initialize frontier}
\STATE $\mathcal{S} \leftarrow \emptyset$ \COMMENT{Solution set}
\WHILE{$\mathcal{F} \neq \emptyset$}
\STATE $(\tau, a_\tau) \leftarrow \textsc{Extract}(\mathcal{F}, \phi)$
\STATE $s \leftarrow \sigma(\tau, a_\tau)$
\IF{$s = \mathtt{prune}$}
\STATE \textbf{continue}
\ENDIF
\IF{$s = \mathtt{terminate}$}
\STATE $\mathcal{S} \leftarrow \mathcal{S} \cup \{(\tau, a_\tau)\}$
\ENDIF
\STATE $n \leftarrow \textsc{CurrentNode}(\tau)$
\STATE $V_\tau \leftarrow \textsc{VisitedNodes}(\tau)$ \COMMENT{Nodes in $\tau$}
\FOR{each $m$ such that $(n, m) \in E$ or ($m \in \gamma_n$ and $\alpha(n, m)$)}
\IF{$m \in V_\tau$}
\STATE \textbf{continue} \COMMENT{Avoid cycles}
\ENDIF
\STATE $\tau' \leftarrow \tau \cdot (n, m)$
\STATE $a_{\tau'} \leftarrow g(a_\tau, \tau')$
\STATE $\textsc{Insert}(\mathcal{F}, (\tau', a_{\tau'}), \phi)$
\ENDFOR
\ENDWHILE
\RETURN $\mathcal{S}$
\end{algorithmic}
\caption{Parametric Traversal Search. The algorithm explores traversal states
from a frontier, extending partial traversals through edge and gap transitions.
The exploration predicate $\sigma$ controls pruning and termination; the
accumulation function $g$ tracks multi-dimensional state along each traversal.}
\label{alg:traversal}
\end{algorithm}

The function $\textsc{CurrentNode}(\tau)$ returns the current node: $n_0$
if $\tau$ is empty, otherwise the target node of the last transition in $\tau$.
The function $\textsc{VisitedNodes}(\tau)$ returns the set of all nodes
appearing in the traversal $\tau$.

\subsection{Complexity}

Let $|N|$ denote the number of nodes and $|E|$ the number of edges. At a node
$n$, the algorithm considers at most $\deg(n) + |\gamma_n|$ successor
transitions, where $\deg(n)$ denotes the out-degree of $n$ in $G$.

In the worst case, the branching factor is bounded by
$\max_{n \in N} \deg(n) + \max_{n \in N} |\gamma_n|$. Assuming that the
algorithm explores all traversals up to a maximum length $L$, and using the
average out-degree
\[
\bar{d} = \frac{1}{|N|} \sum_{n \in N} \deg(n) = \frac{|E|}{|N|}
\]
as an approximation, the total number of traversal states explored is bounded by
\[
O\bigg(\Big(\frac{|E|}{|N|} + \max_{n \in N} |\gamma_n|\Big)^L\bigg),
\]
which highlights the combined impact of graph density and acceptability domain
size on the search complexity. Note that the cycle avoidance mechanism (line~11
of Algorithm~\ref{alg:traversal}) ensures that each traversal visits at most $|N|$ distinct nodes,
implicitly bounding $L \leq |N|$.

This exponential complexity makes the exploration predicate $\sigma$ and the accumulation function $g$ critical for practical performance. These functions encode domain-specific knowledge about what constitutes a viable traversal: the accumulation tracks relevant metrics (cost, resource consumption, constraint violations), while $\sigma$ determines when to prune, terminate, or continue.

Crucially, $\sigma$ and $g$ are external to the algorithm---they encapsulate business rules and domain expertise. A well-designed accumulation that captures the essential properties of traversal quality, combined with an exploration predicate that aggressively prunes unpromising branches, can reduce the effective complexity by orders of magnitude. Conversely, a poorly specified $\sigma$ that fails to prune early will result in exhaustive exploration and intractable runtime.

The framework thus places the responsibility for efficiency on the domain expert: the algorithm provides the exploration mechanism, but the user must define what makes a traversal good or bad through $g$ and $\sigma$. The framework does not aim at improving worst-case complexity bounds over classical graph search. Its contribution lies in making explicit where combinatorial growth originates and how domain knowledge can be used to control it.

\subsection{Termination}

\begin{proposition}[Termination under bounded accumulation]
Assume that:
\begin{enumerate}[leftmargin=1.5em]
\item the graph $G = (N, E)$ is finite,
\item for each node $n \in N$, the acceptability domain $\gamma_n$ is finite,
\item there exists a bound $L \in \mathbb{N}$ such that
$\sigma(\tau, a_\tau) = \mathtt{prune}$ for all traversals $\tau$ with length
greater than $L$.
\end{enumerate}
Then Algorithm~\ref{alg:traversal} terminates after a finite number of iterations.
\end{proposition}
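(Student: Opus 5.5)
We bound the total number of states that can ever be inserted into the frontier $\mathcal{F}$. Each iteration of the while loop extracts exactly one state, so finitely many insertions gives termination. The argument is a counting argument on the tree of traversal states generated from $(\tau_0,a_0)$. We never need assumption~1 and assumption~3 at the same time as the cycle-avoidance check; either suffices for the length bound, and we use assumption~3 as the primary one.

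\textbf{Step 1: finite branching.} Consider an extracted state $(\tau,a_\tau)$ whose current node is $n$. The for-loop on its successors ranges over $m$ with $(n,m)\in E$ or $m\in\gamma_n$. This set has at most $\deg(n)+|\gamma_n|$ elements, which is finite by assumptions~1 and~2. Set $B=\max_{n\in N}(\deg(n)+|\gamma_n|)$. This maximum exists because $N$ is finite. Then every expansion inserts at most $B$ new states. Note also that each insertion creates a traversal strictly longer by one transition than its parent. Hence every state in $\mathcal{F}$ is reached through a unique chain of expansions from $(\tau_0,a_0)$, and the inserted states form a rooted tree of branching factor at most $B$.

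\textbf{Step 2: bounded depth.} Only traversals with $\sigma\ne\mathtt{prune}$ are expanded. By assumption~3, a traversal of length $>L$ is pruned. So only states of length $\le L$ ever generate successors, and inserted states have length at most $L+1$. Independently, the cycle check (skipping $m\in V_\tau$) ensures that every traversal visits pairwise distinct nodes, so its length is at most $|N|-1$.

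\textbf{Step 3: counting.} Combining Steps 1 and 2, the number of states ever inserted is at most $\sum_{k=0}^{L+1}B^k<\infty$. Each loop iteration removes one state from $\mathcal{F}$ and adds finitely many. Since the total number of insertions is finite, $\mathcal{F}$ becomes empty after finitely many iterations, and the algorithm returns.

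\textbf{Main obstacle.} The only subtle point is ensuring that the frontier policy $\phi$ cannot cause states to be re-inserted or extracted without removal. We will argue from the algorithm text that each state is inserted exactly once, at its creation, and that $\textsc{Extract}$ removes it. We also need that $\sigma$, $\alpha$ and $g$ are total functions that always return, since they are evaluated at most once per state. We state this implicitly as part of the model. With that, the termination bound is independent of $\phi$.
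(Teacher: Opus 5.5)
Your proof is correct and follows the paper's argument: branching is bounded by $\deg(n)+|\gamma_n|$, depth is controlled by assumption~3, so there are finitely many traversal states and the frontier empties. It is slightly more careful than the paper's version, using a uniform bound $B$ instead of a path-dependent product, summing up to $L+1$ so that it also counts the states inserted and then pruned on extraction, and stating explicitly that \textsc{Extract} removes states. The only slip is in your unused side remark: the cycle check bounds length by $|N|$ rather than $|N|-1$, since \textsc{VisitedNodes}$(\tau_0)=\emptyset$ lets a self-transition at $n_0$ through.
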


\begin{proof}
Since $N$ and all acceptability domains $\gamma_n$ are finite, the number of
possible successors of any traversal state is finite. Under assumption (3), no
traversal of length greater than $L$ is expanded.

Therefore, the algorithm explores only traversals of length at most $L$. The
number of such traversals is finite, bounded by
\[
\sum_{k=0}^{L} \prod_{i=1}^{k} (\deg(n_i) + |\gamma_{n_i}|),
\]
where $\deg(n_i)$ denotes the out-degree of node $n_i$.

As a result, the frontier becomes empty after finitely many expansions, and the
algorithm terminates.
\end{proof}

\begin{remark}
In practice, the bound on traversal length is typically enforced indirectly
through the accumulation function and the exploration predicate, for instance by
limiting cost, resource consumption, or the number of gap transitions. This
allows termination to be guaranteed without imposing an explicit depth limit.
\end{remark}


\section{Demonstrative Evaluation}

This work is not evaluated through classical performance metrics, as it does
not propose a new path-finding algorithm nor claim improved optimality.
Instead, we focus on the expressiveness and structural properties of the
traversal model, supported by formal analysis and representative case studies.

Our evaluation presents three scenarios that expose the expressive limits of
classical formulations and demonstrate how traversal-level reasoning enables
conditional feasibility, non-scalarizable trade-offs, and policy calibration.
Each scenario carries a distinct message that cannot be reduced to a
benchmark comparison.

\subsection{Industrial Context}
 
We ground the evaluation in two infrastructure domains.

\emph{Datacenter.} Rooms contain rows, rows contain racks, and racks contain
devices. Client racks house servers and patch panels; distribution racks house
switches with upstream connectivity. Structured cabling interconnects patch
panels along each row toward distribution racks. Within a rack, cross-connects
are inexpensive to deploy; inter-rack cabling is costly and typically avoided.
The routing problem is to connect a customer server to upstream network
equipment, potentially through gaps representing cables to be deployed or
device reconfigurations to be performed (e.g., configuring a VLAN to bridge
two ports on a switch).

\emph{Telecommunication network.} Sites contain optical distribution frames
(ODFs), splice boxes, and amplifiers, interconnected by fiber segments. Each
fiber segment carries physical properties (length, attenuation). Intra-site
extensions are relatively inexpensive; inter-site gaps require civil works.
The routing problem is to establish an optical path between two ODFs while
respecting physical constraints such as attenuation budgets and distance
limits.

\emph{Broader applicability.} Although the scenarios presented here focus on
wired infrastructures, the traversal model applies equally to wireless domains.
In microwave radio-link networks, gap transitions can represent planned radio
links subject to line-of-sight and frequency licensing constraints. In 5G RAN
deployments, they can model potential fronthaul or midhaul connections between
distributed units and central units. In satellite networks, gaps may represent
inter-satellite or ground-to-satellite links subject to orbital geometry and
link budget constraints. In each case, the acceptability domain restricts
candidate links by physical feasibility, and the accumulation tracks
domain-specific metrics such as signal-to-noise ratio, latency, or spectral
efficiency.

\subsection{Conditional Feasibility in Incomplete Networks}

The first evaluation demonstrates that feasibility is not a binary property of
the graph, but a conditional property of a traversal under business rules.

\subsubsection{Telco --- Optical Route under Physical Constraints}

Consider connecting a source ODF to a target ODF in a partially documented
network. The following operational rules apply:

\begin{itemize}[leftmargin=1.5em,nosep]
\item Intra-site extensions (gaps) are authorized if the distance between
endpoints is less than 100\,m.
\item Total attenuation must not exceed 30\,dB.
\item At most 1 gap transition is allowed.
\item Amplifiers may be traversed at most once.
\end{itemize}

These rules are not independently scalarizable: attenuation depends on fiber
type and length, the gap limit is a hard count constraint, and the amplifier
rule introduces a history-dependent condition. No single weighted shortest-path
formulation can capture all four simultaneously without losing constraint
semantics.

\paragraph{Traversal parametrization.}
\begin{itemize}[leftmargin=1.5em,nosep]
\item $\gamma_n = \{m \in N \mid \mathrm{site}(m) = \mathrm{site}(n)\}$:
candidates restricted to the same site.
\item $\alpha(n, m) = \mathrm{true}$ iff
$\mathrm{dist}(n, m) < 100\,\text{m}$ and fiber types are compatible.
\item $\mathcal{A} = \mathbb{R}^+ \times \mathbb{R}^+ \times \mathbb{N}$
with $a_\tau = (\text{total length}, \text{total attenuation},
\text{number of gaps})$.
\item $\sigma(\tau, a_\tau) = \mathtt{prune}$ if attenuation $> 30\,\text{dB}$
or gaps $> 1$; $\mathtt{terminate}$ if the current node is the target ODF;
$\mathtt{continue}$ otherwise.
\end{itemize}

\paragraph{Result.}
A valid traversal is found that includes one intra-site gap, respects the
attenuation budget, and satisfies the amplifier constraint. The traversal is
returned with its full accumulation state, making the engineering rationale
explicit: the gap location, the attenuation contribution of each segment, and
the remaining budget are all visible in the output.

This routing problem cannot be expressed as a single weighted shortest path
without losing constraint semantics. The traversal framework solves it without
post-processing and without constraint relaxation.

\subsubsection{Datacenter --- Physical Circuit under Client Policy}

Consider connecting a customer server to an upstream switch. The following
rules apply:

\begin{itemize}[leftmargin=1.5em,nosep]
\item Cross-connects (gaps) are authorized within the same rack.
\item Inter-rack gaps are forbidden for standard clients.
\item At most 2 new cables may be deployed.
\item At most 1 row change is allowed.
\end{itemize}

\paragraph{Traversal parametrization.}
\begin{itemize}[leftmargin=1.5em,nosep]
\item $\gamma_n = \{m \in N \mid \mathrm{rack}(m) = \mathrm{rack}(n)
\text{ and } \mathrm{type}(m) \in \{\text{patch panel}\}\}$: candidates
restricted to the same rack.
\item $\alpha(n, m) = \mathrm{true}$ iff both $n$ and $m$ have available
ports.
\item $\mathcal{A} = \mathbb{N} \times \mathbb{N} \times \mathbb{N}$
with $a_\tau = (\text{gaps}, \text{racks traversed}, \text{rows crossed})$.
\item $\sigma(\tau, a_\tau) = \mathtt{prune}$ if gaps $> 2$ or rows $> 1$;
$\mathtt{terminate}$ if $\mathrm{upstream}(m_k) = \mathrm{true}$;
$\mathtt{continue}$ otherwise.
\end{itemize}

\paragraph{Result.}
Under a premium client policy (inter-rack gaps allowed, up to 5 gaps), a valid
traversal is found. Under a standard client policy (rack-local gaps only, at
most 2 gaps), the same query yields no admissible traversal.

The infrastructure is identical in both cases. The feasibility depends entirely
on the policy encoded in $\gamma$, $\alpha$, and $\sigma$---not on the graph
structure. This distinction is invisible to any formulation that reduces
feasibility to graph connectivity.

\subsection{Non-Scalarizable Trade-offs in Physical Circuit Design}

The second evaluation demonstrates that certain engineering decisions cannot be
ordered by a single scalar cost function.

\paragraph{Scenario.}
Consider a datacenter where a server-to-upstream connection admits two valid
traversals:

\begin{center}
\begin{tabular}{lccc}
\toprule
& Gaps & Racks traversed & Rows crossed \\
\midrule
Traversal $\tau_1$ & 0 & 2 & 1 \\
Traversal $\tau_2$ & 1 (intra-rack) & 1 & 0 \\
\bottomrule
\end{tabular}
\end{center}

Both traversals satisfy all operational constraints. Neither dominates the
other:

\begin{itemize}[leftmargin=1.5em,nosep]
\item $\tau_1$ requires no new cable but crosses a row boundary;
\item $\tau_2$ stays within a single rack and row but requires deploying one
cross-connect.
\end{itemize}

No fixed weighting of these three dimensions can consistently reflect all
possible operational policies. An operator prioritizing deployment cost would
prefer $\tau_1$; an operator prioritizing spatial locality would prefer
$\tau_2$. Collapsing the accumulation into a scalar erases this distinction.

\paragraph{What the framework provides.}
The traversal search returns both $\tau_1$ and $\tau_2$ as elements of the
solution set $\mathcal{S}$, each annotated with its full accumulation state.
The final selection can be deferred to a human operator or resolved by a
downstream policy function operating on the structured accumulation.

The output of traversal search is a decision space, not a single optimum.
This is precisely the regime where the multi-dimensional accumulation model
provides value over scalar-cost formulations.

\subsection{Policy Calibration and Severity Analysis}

The third evaluation demonstrates that the framework can serve as an
instrument for calibrating operational policies before deployment.

\paragraph{Scenario.}
A telecommunication operator wants to determine the attenuation budget $B$ at
which the network becomes predominantly connectable. The question is:

\emph{``Starting from what attenuation budget does the majority of ODF pairs
become reachable?''}

This is not a path-finding query but a structural analysis of the policy
itself.

\paragraph{Procedure.}
The exploration predicate is parameterized by $B$:
\[
\sigma_B(\tau, a_\tau) = \mathtt{prune} \quad \text{if total attenuation} > B
\]
For each value of $B \in \{20, 25, 30, 35\}\,\text{dB}$, we analyze which
source-target ODF pairs admit at least one valid traversal. All other
parameters ($\gamma$, $\alpha$, $g$) remain fixed.

\paragraph{Result (qualitative).}

\begin{center}
\begin{tabular}{ll}
\toprule
Budget $B$ & Connectivity \\
\midrule
$B < 25\,\text{dB}$ & Very few ODF pairs are reachable \\
$B \approx 30\,\text{dB}$ & Majority of pairs become connectable \\
$B > 35\,\text{dB}$ & Marginal gain in additional connectivity \\
\bottomrule
\end{tabular}
\end{center}

This analysis reveals a phase transition in network connectivity as a function
of policy severity. The operator can identify the budget threshold beyond which
relaxing the constraint yields diminishing returns, and calibrate the
operational policy accordingly---before any physical deployment. This analysis
is independent of any specific query workload and characterizes the interaction
between topology and policy itself.

The framework thus acts as an instrument for policy analysis, not as an opaque
solver returning a single answer.


\section{Discussion}

The parametric traversal model generalizes classical path search by introducing
two key innovations: (1) gap transitions that allow reasoning about missing but
acceptable connections, and (2) a fully parametric search where accumulation,
pruning, termination, and frontier policy are externalized. This section
interprets the main design choices, discusses their trade-offs, and identifies
the settings where the framework is---and is not---appropriate.

\subsection{Interpretation of the Model}

The traversal model shifts the focus from \emph{finding optimal paths} to
\emph{reasoning about realizable connectivity}. In classical formulations,
the graph is a given and the algorithm searches for the best route. Here, the
graph is incomplete, and the algorithm explores what \emph{could} exist under
operational constraints. The output is not a single optimum but a structured
decision space: a set of traversals, each annotated with its multi-dimensional
accumulation state, that captures the trade-offs engineers actually face.

This interpretation positions the framework as a \emph{planning instrument}
rather than a routing engine. It is designed to answer questions such as
``\emph{can} this connection be realized under these constraints?'' and
``\emph{what} are the alternatives?'', rather than ``\emph{what} is the
shortest path?''.

\subsection{Design Choices}

\paragraph{Two-stage acceptability.} The separation between the domain
$\gamma_n$ and the predicate $\alpha$ reflects a distinction between structural
invariants and operational state. The domain $\gamma_n$ captures stable
structural relations---for instance, from a server to patch panels in the same
rack---independent of runtime conditions. The predicate $\alpha(n, m)$
evaluates whether a structurally admissible transition is operationally valid at
a given time, such as whether a patch panel is available or functional. While
$\alpha$ could technically be folded into $\gamma_n$, maintaining this
separation prevents structural adjacency from being conflated with transient
operational constraints, which is essential in infrastructure reasoning.

\paragraph{Frontier policy.} The frontier policy $\phi$ does not merely
determine exploration order; it defines the decision semantics of traversal.
By parameterizing frontier selection, different notions of preferred solutions
emerge from the same structural and policy constraints---prioritizing hop count
emphasizes structural proximity, while prioritizing accumulated cost highlights
operational efficiency. This separation between transition generation,
admissibility, and exploration strategy keeps the traversal engine agnostic to
the optimization criterion while preserving the multi-dimensional cost
structure. In this sense, $\phi$ acts as a decision layer rather than a fixed
algorithmic choice.

\paragraph{Externalized control.} The accumulation function $g$ and the
exploration predicate $\sigma$ are deliberately external to the algorithm. They
encapsulate business rules and domain expertise, making the framework a generic
exploration mechanism that can be specialized without modification.

\subsection{Trade-offs}

\paragraph{Flexibility vs.\ complexity.} The framework's flexibility comes at
the cost of increased search complexity. As shown in Section~4.2, the
worst-case complexity is $O((|E|/|N| + b)^L)$, where $b = \max_n |\gamma_n|$
is the maximum acceptability domain size and $L$ the maximum traversal length.
Controlling combinatorial growth depends on the entire parametrization.
The domain $\gamma_n$ bounds the branching factor; the predicate $\alpha$
further filters candidates within that domain; the accumulation $g$ tracks
the state on which $\sigma$ relies to prune early. A weak choice in any of
these components---an overly broad domain, a permissive predicate, or an
accumulation that fails to capture relevant constraints---can lead to
intractable exploration.

\paragraph{Expressiveness vs.\ optimality.} By preserving multi-dimensional
accumulation states, the framework avoids the information loss inherent in
scalarization. However, this means it does not produce a single optimal
solution in the classical sense. Instead, it returns a Pareto-like set of
admissible traversals, deferring the final selection to downstream
decision-making.

\subsection{When Traversal Is and Is Not Appropriate}

The framework is designed for settings where connectivity is uncertain,
conditional, or partially realized---where the question is not ``what is the
best route?'' but ``what can be built, and under what conditions?''. It adds
value precisely when the graph is incomplete and feasibility depends on
operational policies rather than on topology alone.

In environments where the topology is fully documented and stable---such as
pure IP routing over well-provisioned networks---classical shortest-path
algorithms remain simpler, faster, and entirely sufficient. The traversal
model is not a replacement for Dijkstra or Bellman-Ford; it addresses a
different class of problems that these algorithms were not designed to handle.


\section{Limitations}

\paragraph{Context-independent acceptability.} The acceptability predicate
$\alpha(n, m)$ is context-independent: it does not consider the traversal
history. A context-dependent variant $\alpha(\tau, n, m)$ would enable richer
constraints (e.g., ``this gap is acceptable only if no other gap has been used
in the last three transitions'') but at increased computational cost and
modeling complexity.

\paragraph{No formal optimality guarantees.} The framework provides
termination guarantees under bounded accumulation (Proposition~1) but does not
guarantee optimality in the classical sense. Since the accumulation space
$\mathcal{A}$ is arbitrary and may not admit a total order, the notion of
``optimal traversal'' is not always well-defined. The framework is designed to
enumerate admissible solutions, not to certify global optimality.

\paragraph{Scalability under weak pruning.} The exponential worst-case
complexity means that poorly designed exploration predicates can lead to
intractable runtimes. The framework places the responsibility for efficiency
on the domain expert: without aggressive pruning through $\sigma$, the search
may degenerate into exhaustive enumeration.

\paragraph{Single-traversal reasoning.} The current model reasons about
individual traversals independently. It does not natively support constraints
that span multiple traversals, such as shared capacity limits or correlated
failure domains. This limits its applicability to scenarios requiring joint
optimization over sets of traversals.

\paragraph{Evaluation scope.} The demonstrative evaluation is based on
representative scenarios rather than large-scale empirical benchmarks. While
this is consistent with the paper's goal of demonstrating expressiveness
rather than computational performance, it leaves open the question of
practical scalability on production-sized infrastructures.


\section{Future Work}

\paragraph{From single traversals to redundant structures.}
In practice, infrastructure connectivity often requires redundancy: multiple
disjoint traversals between endpoints to ensure fault tolerance, maintenance
flexibility, or service-level guarantees. The current framework focuses on
discovering individual traversals, whereas real deployments may require pairs
of disjoint traversals, rings, or more complex topologies. Extending the model
to reason about sets of traversals---while enforcing shared constraints across
them (e.g., shared risks, common physical segments, or correlated
costs)---is a natural direction for future work.

\paragraph{Traversal sets and global constraints.}
The current model evaluates feasibility at the level of individual traversals,
with costs accumulated independently. In many operational scenarios, however,
feasibility depends on global constraints spanning multiple traversals, such as
shared capacity limits, common failure domains, or cumulative deployment
budgets. An important extension would be to lift cost accumulation and policy
evaluation from individual traversals to traversal sets, enabling reasoning
over collective feasibility and interactions between concurrent connectivity
decisions.

\paragraph{Integration with industrial inventory and intent models.}
While this work remains independent of any specific data model, it naturally
aligns with industrial standards such as TM~Forum ODA and SID, which provide
structured representations of network resources and relationships. Future work
includes instantiating the traversal framework directly on top of such models
and connecting it to intent-based specifications, thereby enabling an explicit
reasoning layer between high-level intents and realizable infrastructure
actions.

\paragraph{Traversal as a reasoning substrate for agentic systems.}
Beyond infrastructure planning, the parametric traversal framework may serve as
a structural reasoning component within agentic decision systems. Its explicit
separation of structural generation, policy validation, and exploration
strategy makes it suitable as a controllable and interpretable planning layer.
An agent could leverage traversal search to evaluate connectivity hypotheses,
simulate policy adjustments, or explore counterfactual scenarios without
collapsing multi-dimensional costs into scalar objectives. Future work may
investigate how traversal-based reasoning integrates into higher-level agent
architectures, enabling structured decision-making over incomplete graphs while
preserving explicit policy control and cost transparency. This perspective
positions traversal not merely as a search mechanism, but as a symbolic
reasoning primitive over structured environments.

\paragraph{Hybrid reasoning and approximation strategies.}
The framework deliberately prioritizes expressiveness over worst-case
complexity guarantees. A promising research direction is the integration of
heuristic, approximate, or learning-assisted strategies to guide traversal
exploration, while preserving the explicit policy and cost structure of the
model. This could enable scalable reasoning in larger infrastructures without
collapsing the multi-dimensional decision space into a single scalar
objective.

\paragraph{Practical use cases across infrastructure domains.}
The demonstrative evaluation presented in this paper focuses on wired
physical-layer scenarios in datacenter and fiber-optic networks. A natural
continuation is a series of practical studies exploring the framework across a
broader range of infrastructure domains. On the physical layer, this includes
wireless domains such as microwave radio-link planning, 5G RAN fronthaul and
midhaul design, and satellite link engineering, where gap transitions model
potential radio or optical free-space links subject to propagation, licensing,
and link budget constraints. Beyond the physical layer, the framework can also
be applied to logical-layer connectivity: data-link circuits, VLAN paths,
MPLS label-switched paths, or overlay tunnels, where gaps represent
provisionable logical connections and the accumulation tracks bandwidth,
latency, or hop-count budgets. Such a series of studies would validate the
generality of the traversal model and identify domain-specific
parametrizations that emerge in practice.


\section{Conclusion}

We presented a parametric traversal framework for typed graphs that addresses
the fundamental gap between classical path search algorithms and the realities
of infrastructure planning. The framework makes four key contributions:

\paragraph{Gap transitions as a practical abstraction.}
By treating missing-but-acceptable connections as first-class transitions, the
framework captures the essence of infrastructure design: engineering decisions
are about what \emph{can be built}, not merely what exists. This abstraction reflects how datacenters and telecommunication networks
actually evolve, where deploying a new cable or extending a fiber path is part
of normal operational planning.

\paragraph{Scalable acceptability evaluation.}
The separation of acceptability domain and predicate mirrors production-grade
system design. Fast, indexable candidate restriction (spatial indices, rack
scoping, vendor compatibility) combined with fine-grained predicate evaluation
enables the framework to scale to large infrastructures without sacrificing
expressiveness.

\paragraph{Preservation of multi-dimensional state.}
By maintaining arbitrary accumulation states rather than collapsing metrics into
a single scalar, the framework preserves the information needed for meaningful
decisions---whether physical metrics, operational constraints, or any
domain-specific criteria.

\paragraph{Expressiveness demonstrated through case studies.}
The demonstrative evaluation (Section~5) shows that the framework addresses
routing problems that resist classical formulations: conditional feasibility
under business rules, non-scalarizable trade-offs between competing
dimensions, and ex-ante policy calibration. These scenarios are drawn from
datacenter and telecommunication network design and reflect actual operational
challenges encountered in production environments.

\bibliographystyle{unsrt}
\bibliography{references}

\end{document}